\documentclass[12pt]{article}
\usepackage[margin=1in]{geometry}
\usepackage{graphicx}
\usepackage{subcaption}
%\graphicspath{ {/home/samuel/Documents/other/other_p/research/power_laws/pl_code/plots/}}
\graphicspath{{./plots/}}
\usepackage{float}
\usepackage{amsmath}
\usepackage[none]{hyphenat}
%%%% For cited on page x in references%%%%%%%%%%%%%%%%%%%%%%%%%%%%%%%%%%%%
\usepackage[backref=page,hidelinks]{hyperref}
\renewcommand*{\backref}[1]{}
\renewcommand*{\backrefalt}[4]{%
	\ifcase #1 (Not cited.)%
	\or        (Cited on page~#2.)%
	\else      (Cited on pages~#2.)%
	\fi}
%%%%%%%%%%%%%%%%%%%%%%%%%%%%%%%%%%%%%%%%%%%%%%%%%%%%%%%%%%%%%%%%%%%%%%%
\usepackage[numbers]{natbib}
\usepackage{amssymb}
\usepackage{url} %fixes weird error.

\usepackage{amsthm}
\newtheorem{prop}{Proposition}
\newtheorem{Def}{Definition}

\newtheorem*{Thm*}{Theorem}
\newtheorem{Egs}{Example}

\usepackage[nottoc,notlot,notlof]{tocbibind}
\usepackage[title,titletoc]{appendix}

\usepackage{makecell}

\usepackage{enumitem}

\usepackage[none]{hyphenat}

\newcommand{\R}{\mathbb{R}}
\newcommand{\E}{\mathbb{E}}
\renewcommand{\P}{\mathbb{P}}

  %for black square at end of proof

%%for maths bold in sections
\makeatletter
\g@addto@macro\bfseries{\boldmath}
\makeatother

%Roman numerals
\makeatletter
\newcommand*{\rom}[1]{\expandafter\@slowromancap\romannumeral #1@}
\makeatother

%Change bibliography title to references

\usepackage[title]{appendix}

\begin{document}
	
		\title{Linear Regression for Power Law Distribution Fitting}
			\author{Samuel Forbes\footnote{forbessam2@gmail.com}}
		\date{December 2023}
		
		\maketitle
		
		\begin{abstract}
			We fit the exponent of the Pareto distribution, that is equivalent or can approximate  the continuous power law distribution given a cutoff point, using  linear regression (LR). We use LR on the logged variables of the empirical tail (one minus the empirical cumulative distribution function). We find the distribution of the consistent LR estimator and an approximate sigmoid relationship of the mean that underestimates the exponent. By factoring out a sigmoid function used to approximate the mean we transform the LR estimator so it is approximately unbiased with variance comparable to the minimum variance unbiased transformed MLE estimator. 
		\end{abstract}
		
		\section{Introduction}
		
		Power laws in the probability distribution appear across a wide number of disciplines \cite{clauset2009power}. We shall focus in this paper on \textbf{continuous power law distributions} that approximate variables such as income, wealth and stock returns \cite{gabaix2009power}. Several definitions of a power law distribution appear in the literature. We focus on the piecewise power law distribution and the regularly varying distribution which we define in the next section.  The power law part of the former distribution can be exactly viewed as a Pareto distribution whereas the power law part of the latter can only be approximated as a Pareto distribution. We fit only the exponent of the distribution and not the cutoff point which can be estimated using a Monte Carlo method found in Clauset et al. \cite{clauset2009power}. There are several methods of fitting the exponent of a Pareto distribution that are consistent (the estimator tends to the true parameter as the sample size tends to infinity) including the method of moments, maximum likelihood estimation (MLE), quantile methods and linear regression (LR)  \cite{quandt1964old}. Many have pointed out, for example \cite{clauset2009power},  that LR is inferior to MLE but to the author's knowledge it has not been analysed in detail why this is the case.
		We focus in this paper on fitting the exponent with LR on the empirical tail (one minus the empirical cumulative distribution funciton). We give a theoretical result on the distribution of the LR estimator as well as approximations on the mean and variance via simulation\footnote{Simulations are found at the author's github \cite{sam_git_code}}. We find evidence that the LR estimator is biased with a mean that underestimates the true exponent in a sigmoidal fashion depending on the sample size. Factoring out a sigmoid function we find this non-linear transformation of the LR estimator is roughly unbiased and comparable though greater in variance to a non-linear transformation of the MLE estimator which is unbiased of minimum variance \cite{rytgaard1990estimation}. We emphasise that the MLE estimator is superior however we present these results on the LR estimator as they may be of interest. 
		
		\section{Definitions of a Power Law Distribution}
		
		\label{sec_pl}
		
		As in Clauset et al. \cite{clauset2009power} we define the power law in two ways: one as a piecewise distribution and the other asymptotically as a regularly varying distribution. The piecewise power law distribution approximates the regularly varying distribution after a cutoff point which we assume is known. In this paper we will fit only the exponent of the piecewise power law distribution assuming the cutoff. We show below how this is identical to fitting the exponent of the Pareto distribution.
		
		\subsection{Piecewise Distribution}
		
		Let $X$ be a continuous random variable.
		We define the \textbf{power law distribution} as one that has a power law tail\footnote{We note Clauset et al. \cite{clauset2009power} defines the power law such that $\P(X>x) = \alpha/x^{\beta-1}$ so the density is $f(x) = c/x^{\beta}$ for an appropriate constant $c\,.$ }
		\begin{equation}
		\P(X>x) = \alpha/x^{\beta} \, , \quad  x>x_m \,
		\label{pl_tail}
		\end{equation}
		with $\alpha, \,\beta, \, x_m > 0\,.$ We call $\beta$ the \textbf{power law exponent} and $x_m$ the \textbf{cutoff}. Before $x_m$ we assume the tail of $X$ is defined by another function so that $X$ is a piecewise distribution.
		By taking the negative of the derivative of \eqref{pl_tail} we have that the density of $X$ is
		\begin{equation}
		f(x) = \beta \alpha/x^{\beta+1} \, , \quad  x>x_m \,.
		\label{dens_pl}
		\end{equation}
		Let us define the truncated distribution:
		\begin{equation}
		X_{\text{P}} = \{X|X>x_m\}\,,
		\label{Pareto_RV}
		\end{equation}
		then one can integrate over the entire domain, $[x_m,\infty)\,,$ to find that $X_{\text{P}}$ is a \textbf{Pareto distribution} with tail \eqref{pl_tail} and density \eqref{dens_pl} such that $\alpha = x_m^{\beta}\,.$ Now, for all $x>x_m\,,$
		\begin{align}
		\P(X>x) &= \P(X>x,X>x_m) = \P(X>x_m)\P(X>x|X>x_m) \nonumber \\ 
		&= \P(X>x_m)\P(X_{\text{P}}>x) \nonumber \\ 
		&=\P(X>x_m)\left(\frac{x_m}{x}\right)^{\beta} \,. \label{pl_pareto}
		\end{align}
		Thus in general $\alpha = \P(X>x_m) x_m^{\beta}\,.$
		  We shall assume \textbf{the cutoff} $x_m$ \textbf{is known (or has been fitted)} and we are concerned only  with \textbf{fitting the exponent} $\beta\,.$ We refer the reader to Clauset et al. \cite{clauset2009power} for a Monte-Carlo method using the Kolmogorov-Smirnov statistic to fit $x_m\,.$
		
		\subsection{Regularly Varying Distribution} 
		
		We shall briefly discuss a more general asymptotic representation of the power law distribution, the regularly varying distribution, and show how the piecewise distribution above approximates this distribution for large enough $x_m\,.$ We mention here that the method for fitting $x_m$ in Clauset et al. \cite{clauset2009power} has useful statistical properties such as consistency when applied to regularly varying distributions, see \cite{bhattacharya2020consistency}. 
		
		\begin{Def}[Regularly Varying Distribution, see e.g. \cite{jessen2006regularly}]
			A continuous random variable $X$ has a regularly varying distribution if
			for all $\lambda>0$ there exists a $\beta>0$ such that the tail distribution is a regularly varying function:
			\begin{equation*}
			\lim_{x \rightarrow \infty} \frac{\P(X> \lambda x)}{\P(X>x)} = \lambda^{-\beta}\,.
			\end{equation*}
		\end{Def}
		We see simply that the piecewise distribution \eqref{pl_tail} is a regularly varying distribution.
		
		\begin{Def}[Slowly Varying Function, see e.g. \cite{jessen2006regularly, seneta2006regularly}]
			A function $l$ is slowly varying if for all $\lambda>0$
			\begin{equation*}
			\lim_{x \rightarrow \infty} \frac{l(\lambda x)}{l(x)} = 1 \,.
			\end{equation*}
		\end{Def}
		
		One can show (p2 of \cite{seneta2006regularly}) that a regularly varying distribution can be written as 
		\begin{equation*}
		\P(X>x) = l(x) x^{-\beta}
		\end{equation*}
		where $l(x)$ is a slowly varying function. We have the following properties
		(p18 of \cite{seneta2006regularly}) of slowly varying functions: for all $\varepsilon>0$
		\begin{equation}
		\lim_{x \rightarrow \infty } l(x) x^{-\varepsilon} =0\,, \quad 
		\lim_{x \rightarrow \infty } l(x) x^{\varepsilon} = \infty\,.
		\label{slow_var_limits}
		\end{equation}
		
		We now show that a regularly varying distribution can be approximated by the piecewise power law distribution \eqref{pl_tail}.
		\begin{prop}
			If $X$ is a regularly varying distribution then for any $\varepsilon>0$ there exists an $x_m>0$ such that with $\alpha=\P(X>x_m) x_m^{\beta}$ and $x>x_m$ 
			\begin{equation*}
			\alpha x^{-\beta-\varepsilon}< \P(X>x)< \alpha x^{-\beta+\varepsilon} \,.
			\end{equation*}
			\label{reg_vary_approximation}
		\end{prop}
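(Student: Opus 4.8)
The plan is to reduce the claim to a two-sided bound on the slowly varying factor $l$ and then exploit the limits in \eqref{slow_var_limits}. Writing the regularly varying tail in the form $\P(X>x)=l(x)x^{-\beta}$ with $l$ slowly varying, the prescribed constant collapses to $\alpha=\P(X>x_m)x_m^{\beta}=l(x_m)$. Dividing the target inequalities through by $x^{-\beta}>0$, the proposition becomes equivalent to exhibiting an $x_m$ (which I will always take with $x_m>1$) such that $l(x_m)x^{-\varepsilon}<l(x)<l(x_m)x^{\varepsilon}$ for every $x>x_m$; that is, the ratio $l(x)/l(x_m)$ must be trapped between $x^{-\varepsilon}$ and $x^{\varepsilon}$. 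Since $X$ is continuous, $\P(X>x)$ is continuous and positive, so $l(x)=\P(X>x)x^{\beta}$ is continuous and positive as well, which I will use to turn the asymptotic limits into genuine bounds on a tail.

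First I would extract crude power-law envelopes for $l$ from \eqref{slow_var_limits} applied with the smaller exponent $\varepsilon/2$. Because $l(x)x^{-\varepsilon/2}\to 0$ and is continuous, it is bounded on $[1,\infty)$, and because $l(x)x^{\varepsilon/2}\to\infty$ it is bounded below there by a positive constant; this yields constants $0<Q\le P<\infty$ with $Qx^{-\varepsilon/2}\le l(x)\le Px^{\varepsilon/2}$ for all $x\ge 1$. Feeding these envelopes into the ratio requirement, the right-hand inequality $l(x)<l(x_m)x^{\varepsilon}$ is implied by $Px^{\varepsilon/2}<Qx_m^{-\varepsilon/2}x^{\varepsilon}$, which holds precisely when $x>(P/Q)^{2/\varepsilon}x_m$, and the left-hand inequality reduces to the same condition symmetrically. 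Thus a choice of $x_m>1$ settles at once all $x$ lying beyond a fixed multiple of $x_m$.

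The main obstacle, and where I expect the real work to concentrate, is the bounded intermediate window $x\in(x_m,(P/Q)^{2/\varepsilon}x_m]$, on which the crude envelopes are too lossy to conclude. At the endpoint $x=x_m$ the inequality holds with room to spare, since $l(x_m)x_m^{-\varepsilon}<l(x_m)<l(x_m)x_m^{\varepsilon}$ whenever $x_m>1$, and I would propagate this across the window by controlling $l(x)/l(x_m)$ uniformly for the \emph{ratio} $x/x_m$ ranging over a fixed compact set. This uniform control is exactly the uniform convergence property of slowly varying functions, equivalently a two-sided Potter-type bound $C^{-1}(x/x_m)^{-\varepsilon}\le l(x)/l(x_m)\le C(x/x_m)^{\varepsilon}$ valid for all $x\ge x_m$ once $x_m$ is large; combining it with the choice $x_m>C^{1/\varepsilon}$ closes the window (indeed it closes all $x\ge x_m$ simultaneously) and completes the argument. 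If one is content to assume that the slowly varying factor is eventually monotone, which is the case for the envelopes relevant in the sequel, then the window argument degenerates to a direct monotonicity comparison and the heavier uniform-convergence machinery can be sidestepped entirely.
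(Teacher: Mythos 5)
Your proof is correct, and after the shared opening reduction it takes a genuinely different --- and more demanding --- route than the paper's. Both arguments begin identically: write $\P(X>x)=l(x)x^{-\beta}$, observe that $\alpha=\P(X>x_m)x_m^{\beta}=l(x_m)$, and divide through by $x^{-\beta}$ to reduce the claim to $l(x_m)x^{-\varepsilon}<l(x)<l(x_m)x^{\varepsilon}$ for all $x>x_m$. At that point the paper simply declares the inequality true ``by the definition of the above limits'' \eqref{slow_var_limits}. Those limits, however, only say that for each \emph{fixed} constant $c$ there is a threshold beyond which $cx^{-\varepsilon}<l(x)<cx^{\varepsilon}$; here the relevant constant is $l(x_m)$, which moves with the very $x_m$ being chosen, and reconciling the threshold with the constant is exactly the self-referential difficulty your ``window'' discussion isolates. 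Your resolution --- a uniform two-sided Potter-type bound $C^{-1}(x/x_m)^{-\varepsilon}\le l(x)/l(x_m)\le C(x/x_m)^{\varepsilon}$ for $x\ge x_m$ large, followed by the choice $x_m>C^{1/\varepsilon}$ --- closes this cleanly, at the cost of importing the uniform convergence/Potter machinery for slowly varying functions (available in the cited Seneta reference, but a strictly deeper fact than the pointwise limits the paper leans on). Two small remarks: as you note yourself, the Potter step already covers all $x\ge x_m$ at once, so the preliminary envelope stage with $P$ and $Q$ is redundant and can be deleted; and the monotonicity fallback should not be offered as an alternative proof, since a slowly varying $l$ need not be eventually monotone and no such hypothesis is available here. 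In short, what the paper's direct appeal to \eqref{slow_var_limits} buys is brevity; what your argument buys is a complete justification of the step the paper leaves implicit.
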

		\begin{proof}
			Suppose for any $\varepsilon>0\,,$ that there exists an $x_m>0$ such that for  $x>x_m$ and $\alpha =\P(X>x_m) x_m^{\beta}$ that
			\begin{align*}
			\alpha x^{-\beta-\varepsilon}&< \P(X>x)<\alpha x^{-\beta+\varepsilon} \Leftrightarrow \\
			\alpha x^{-\beta-\varepsilon}&< l(x)x^{-\beta}<\alpha x^{-\beta+\varepsilon} \Leftrightarrow \\
			\alpha x^{-\varepsilon}&< l(x)<\alpha x^{\varepsilon} 
			\end{align*}
			which is true by the definition of the above limits of slowly varying functions \eqref{slow_var_limits}.
		\end{proof}
		
		Thus Proposition \ref{reg_vary_approximation} shows that for $\varepsilon \gtrapprox 0 $ there exists an $x_m$ such that for $x>x_m$ we have $\P(X>x) \approx \alpha x^{-\beta}$ with $\alpha =\P(X>x_m) x_m^{\beta}\,.$
		
		\subsection{Examples}
		We now give an example of each of the two types of power law defined above. First a power law distribution that is piecewise (and regularly varying) that has an exponential then power law shape:
		\begin{Egs}[Piecewise power law, similar to (3.10) in \cite{clauset2009power}]
			For $x_m\,,\,\beta>0$ define the continuous random variable $X$ by the piecewise tail
			\begin{equation}
			\P(X>x) = \begin{cases}
			e^{-\beta \frac{x}{x_m}}\,, & 0<x \leq x_m \\
			e^{-\beta} \left(\dfrac{x_m}{x}\right)^{\beta}\,, & x>x_m
			\end{cases} \label{exp_pl_piecewise}
			\end{equation}
			\label{egs_exp_pl_piecewise}
		\end{Egs}
		Second a power law distribution that is not piecewise but is regularly varying:
		\begin{Egs}[Lomax \cite{lomax1954business} or equivalently Pareto Type \rom{2} \cite{arnold2008pareto}]
			For $ \lambda \,,\, \beta>0$ define the continuous random variable $X$ by the regularly varying tail
			\begin{equation}
			\P(X>x) = (1+x/\lambda)^{-\beta}\,, \quad x>0 \,.
			\end{equation}
			\label{egs_reg_var}
		\end{Egs}
			
		We now sample once from each of the distributions in the above examples and plot the empirical tail, see the appendix. We fit with the unbiased minimum variance MLE and roughly unbiased LR power law exponent estimators found in Sections \ref{sec_mle} and \ref{sec_lr}. For the piecewise distribution the power law part can be viewed exactly as points from a Pareto distribution. For the Lomax distribution a cutoff point $x_m$ must be chosen so that beyond $x_m$ we can only approximate by a Pareto distribution. We choose the cutoff exactly for the piecewise distribution and only approximately using the technique in Clauset et al. \cite{clauset2009power} for the Lomax distribution. We see the fits with both methods for the exponent of this particular Lomax distribution are not very accurate even for a large sample size. Thus we urge caution when using these methods for fitting the exponent of a regularly varying distribution (we find evidence from simulation not presented here that in this case a larger initial sample with a subsequently larger $x_m$ is required to get a closer estimate consistently).
	 	\begin{figure}[H]
	 		\centering
	 		\captionsetup{justification=centering}
	 		\includegraphics[width=0.8\textwidth]{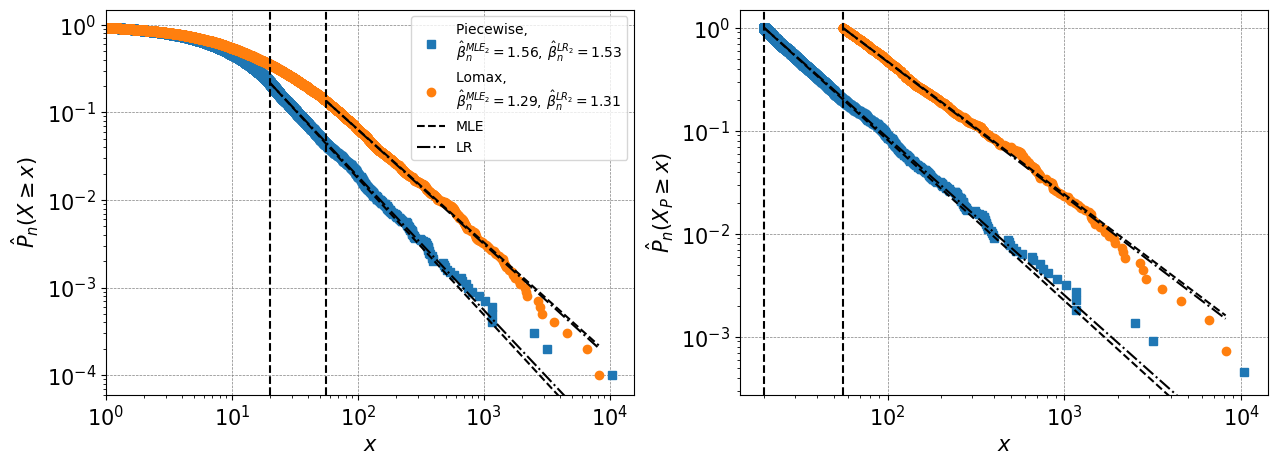}
	 		\caption{We generate a single sample of size $10^4$ from each distribution in Example \ref{egs_exp_pl_piecewise} and \ref{egs_reg_var}. For both examples we set $\beta=1.5$ and fit the power law points \eqref{pl_tail} to estimate $\beta\,$. For Example \ref{egs_exp_pl_piecewise} we set $x_m=20$ and fit the power law, of which there are roughly $2100$ points,  with both MLE \eqref{MLE_beta_est1}, fit $\hat{\beta}_n^{\text{MLE}}\approx1.56$, and LR \eqref{beta_lr_est2}, fit $\hat{\beta}_n^{\text{LR}}\approx1.53\,$. For Example \ref{egs_reg_var} we set $\lambda=20$ and estimate $x_m \approx 56$ using the method in \cite{clauset2009power} and fit the power law, of which there are roughly $1300$ points, with both MLE \eqref{MLE_beta_est1}, fit $\hat{\beta}_n^{\text{MLE}}\approx1.29$ and LR \eqref{beta_lr_est2}, fit $\hat{\beta}_n^{\text{LR}}\approx1.31\,$. Left figure is the full distribution, right figure is where we cut off the distribution at $x_m$ so that this new distribution $X_{\text{P}}$ \eqref{Pareto_RV} is exactly or approximated by a Pareto distribution.}
	 		\label{piecewise_Lomax}
	 	\end{figure}
		
		\section{Fitting the Power Law Exponent}
		
		We consider an i.i.d. sample $\mathbf{x}=\{x_1,x_2,\dots,x_n\}$ from a random variable $X$ that is from the power law tail part of the distribution \eqref{pl_tail}. We assume $x_m$ and $\P(X>x_m)$ are known or have been fitted and emphasise $x_m$ is not from the sample but is the cutoff in \eqref{pl_tail} for which $x_i>x_m$ for all $x_i\in \mathbf{x}\,$. We note from \eqref{pl_pareto} we only have to fit the Pareto  distribution and thus take $\mathbf{x}$ from the Pareto distribution $X_{\text{P}}$ \eqref{Pareto_RV}. 
		
		\subsection{Maximum Likelihood Estimation}
		
		\label{sec_mle}
		
			Suppose $X$ follows a power law distribution defined by \eqref{pl_tail}. We saw that 
			$X_{\text{P}}$ follows a Pareto distribution with density 
			$f(x|\beta)= \beta x_m^{\beta}x^{-(\beta+1)}\,.$ Assume $\mathbf{x}=\{x_1,x_2,\dots,x_n\}$ is an i.i.d. sample from $X_{\text{P}}$ then the log-likelihood is defined
			\begin{equation*}
			\log L(\mathbf{x}|\beta) = \log \left(\prod_{i=1}^{n}f(x_i|\beta) \right) \,.
			\end{equation*}	
			By maximising the log-likelihood (respectively the likelihood) with respect to $\beta$ we find that the MLE estimator for $\beta$ given $x_m$ is
			\begin{equation}
			\hat{\beta}_n^{\text{MLE}_1} = \frac{1}{1/n \sum_{i=1}^{n}\log (x_i/x_m) } \,.
			\label{MLE_beta_est}
			\end{equation}
			
			The MLE estimator \eqref{MLE_beta_est} for the Pareto distribution has been known for a long time with Clauset et al. \cite{clauset2009power} referencing Muniruzzaman, 1957 \cite{muniruzzaman1957measures} as an early paper containing this result. We refer the reader to \cite{clauset2009power} for a summary of useful statistical properties of this estimator which includes consistency. 
		 We note that \eqref{MLE_beta_est} is exactly the Hill's estimator for the Pareto distribution $X_{\text{P}}$ \cite{hill1975simple}. In particular it can be shown, see Rytgaard \cite{rytgaard1990estimation}, that 
		 	$\hat{\beta}_n^{\text{MLE}_1}$ has an inverse Gamma distribution with mean and variance
		 	\begin{equation}
		 	\E[\hat{\beta}_n^{\text{MLE}_1}]=n \beta/(n-1)\,, \quad \text{Var}(\hat{\beta}_n^{\text{MLE}_1})= n^2 \beta^2 /((n-1)^2(n-2))\,.
		 	\label{mle_mean_var}
		 	\end{equation}
			
			This MLE  estimator \eqref{MLE_beta_est}, as noted in \cite{rytgaard1990estimation}, can be corrected for bias. From \eqref{mle_mean_var} we see the \textbf{unbiased MLE estimator} is (for $n>1$)
			\begin{equation}
			\tilde{\beta}_n^{\text{MLE}_2}= \frac{n-1}{n}\hat{\beta}_n^{\text{MLE}_2} = 
			\frac{1}{\frac{1}{n-1} \sum_{i=1}^{n}\log (x_i/x_m) }
			\label{MLE_beta_est1}
			\end{equation}
			with mean and variance (for $n>2$)
			\begin{equation}
			\E[	\hat{\beta}_n^{\text{MLE}_2}] = \beta \,, \quad 
			\text{Var}(\hat{\beta}_n^{\text{MLE}_2}) = \frac{\beta^2}{n-2}< \text{Var}(\hat{\beta}_n^{\text{MLE}_1}) \,.
			\label{mle1_mean_var}
			\end{equation}
			It is shown in \cite{rytgaard1990estimation} that the unbiased MLE estimator \eqref{MLE_beta_est1} has \textbf{minimum variance} and is \textbf{asymptotically normally distributed} with mean $\beta$ and variance $\beta^2/n$ and is a \textbf{consistent estimator} (the same for the untransformed MLE estimator \eqref{MLE_beta_est} \cite{clauset2009power}). 
			
			Now approximating either MLE estimator with the asymptotic normal we have that $\approx 68$ \% of values of the estimator lie within one standard deviation i.e. lie within the interval 
			\begin{equation}
			(\beta-\beta/\sqrt{n}, \beta + \beta/\sqrt{n}) \,.
			\label{MLE_interval}
			\end{equation}
	 Thus for the size of this interval to be less than $\varepsilon>0$ we want 
			\begin{equation*}
			\frac{2 \beta}{\sqrt{n}} < \varepsilon  \Leftrightarrow n> 4 \beta^2/ \varepsilon^2 \,.
			\end{equation*}
			Therefore for a majority of the values of the MLE estimator to be within for example $0.1$ of either side of $\beta$ we would want a sample of size at least $400 \beta^2\,.$ Now for many real-world situations $0.5<\beta<3\,,$ see e.g. Table 6.1 in \cite{clauset2009power}, thus the sample is required to be quite large before one is reasonably confident that the minimum variance MLE estimator is close to $\beta\,.$

		\subsection{Linear Regression}
		\label{sec_lr}
			We shall now apply linear regression to fit the power law exponent to the power law model \eqref{pl_tail} using the empirical tail (one minus the empirical distribution function, see the appendix). Let us again assume $\mathbf{x}=\{x_1,x_2,\dots,x_n\}$ is an i.i.d. sample from $X_{\text{P}}$ defined above. Then for any $x_i \in \mathbf{x}\,$ the empirical tail is
			\begin{equation}
			\hat\P_n(X_{\text{P}} \geq x_i) = (x_m/x_i)^{\beta} +\varepsilon_{n,x_i} \,
			\label{pl_error1}
			\end{equation}
			where the error $\varepsilon_{n,x_i}$ has the following properties (see the appendix):
			\begin{equation}
			\varepsilon_{n,x_i} \sim (1/n)\text{Binom}_{k\geq1}(n,(x_m/x_i)^{\beta})-(x_m/x_i)^{\beta}\,,
			\label{pl_error2}
			\end{equation}
			where $\text{Binom}_{k\geq1}(.)$ is the truncated Binomial distribution that restricts the support to $\{1,2,\dots,n\}$, for large $n$
			\begin{equation*}
			\varepsilon_{n,x_i} \approx \mathcal{N}(0,(1/n)(x_m/x_i)^{\beta}(1-(x_m/x_i)^{\beta}))\,
			\end{equation*}
			and 
			\begin{equation}
			\varepsilon_{n,x_i} \rightarrow 0 \quad \text{almost surely as } n \rightarrow \infty\,. \label{error_lin_reg_limit}
			\end{equation}
			
			Applying log's to \eqref{pl_error1} we have
			\begin{align}
			\log \hat\P_n(X_{\text{P}} \geq x_i) &= \beta \log x_m - \beta \log x_i + \log (1+(x_m/x_i)^{-\beta} \varepsilon_{n,x_i}) \nonumber \\
			&= - \beta \log (x_i/x_m) +\tilde{\varepsilon}_{n,x_i} \label{linear_eq}
			\end{align}
			Where the new error is
			\begin{equation}
			\tilde{\varepsilon}_{n,x_i}=\log (1+(x_m/x_i)^{-\beta} \varepsilon_{n,x_i})\,.
			\label{log_reg_error}
			\end{equation}
			We note from \eqref{pl_error2} that
			\begin{equation}
			\tilde{\varepsilon}_{n,x_i} \sim \log B_{n,x_i}
			\label{log_reg_error1}
			\end{equation}
			where $B_{n,x_i} \sim (1/n)(x_m/x_i)^{-\beta}\text{Binom}_{k\geq1}(n,(x_m/x_i)^{\beta})\,.$

			Applying \textbf{ordinary least squares}, see e.g. Section 3.1 of  \cite{seber2003linear}, to \eqref{linear_eq} the \textbf{LR estimator} for $\beta$ can be found to be
			\begin{equation}
			\hat{\beta}^{\text{OLS}_1}_n = \frac{\sum_{i=1}^{n}\log(x_i/x_m)	\log(1/ \hat\P_n(X_{\text{P}} \geq x_i))}{\sum_{i=1}^{n}(\log(x_i/x_m))^2} \,.
			\label{beta_lr_est}
			\end{equation}
			We note that if $a\,,b \in \R_{>0}$ then by the change of base formula for logs that
			\begin{equation*}
			\log_{a}x=\log_{b}x/\log_{b}a
			\end{equation*}
			 and so due to cancellation it does not matter which base we take for the log in \eqref{beta_lr_est}. As the $\tilde{\varepsilon}_{n,x_i}$ are a function of the independent $x_i$ then they are independent. However as we will see from simulation, the mean of \eqref{beta_lr_est} is biased, indicating that the errors $\tilde{\varepsilon}_{n,x_i}$ do not satisfy the properties of zero mean and constant variance to satisfy the further Gauss-Markov conditions, see e.g. Theorem 3.2 of \cite{seber2003linear}. Due to the relatively complicated distribution of $\tilde{\varepsilon}_{n,x_i}$ \eqref{log_reg_error1} it is however hard to prove these properties formally.
			However as $n \rightarrow \infty$ we have by \eqref{error_lin_reg_limit} and
			\eqref{log_reg_error} that
			\begin{equation*}
			\tilde{\varepsilon}_{n,x_i} \rightarrow 0 \quad \text{almost surely as } n \rightarrow \infty\,, \label{error_lin_reg_limit1}
			\end{equation*}
			thus the estimator will tend almost surely to $\beta$ as $n \rightarrow \infty$ and so the  
			LR estimator \eqref{beta_lr_est} is a \textbf{consistent estimator} as noted by 
			Quandt \cite{quandt1964old}.

			Now let $\mathbf{x}_{\text{o}}=\{x_{(1)},x_{(2)},\dots,x_{(n)}\}=\mathbf{x}$ be the ordered sample of $\mathbf{x}$ thus \newline $x_{(1)}<x_{(2)}<\dots<x_{(n)}\,.$ We have that the empirical tail, see the appendix, of the ordered sample is
			\begin{equation}
			\hat{\P}_n(X \geq x_{(i)}) = \frac{n-i+1}{n}\,,\quad i=1,2,\dots,n\,.
			\label{emp_tail_order}
			\end{equation}
			As $x_i$ is from a Pareto distribution with parameters $x_m$ and $\beta$ then $k_i=\log(x_i/x_m)$ (specifically for the \textbf{natural logarithm} which we use from now on) is exponentially distributed with parameter $\beta\,$. Now the result by R\'enyi, see \cite{renyi1953theory}, on the distribution of the order statistics for an exponential distribution gives 
			\begin{equation}
			k_{(i)}=\log(x_{(i)}/x_m) \sim \frac{1}{\beta} \sum_{j=1}^{i} \frac{Z_j}{n-j+1}
			\,,\quad i=1,2,\dots,n \,
			\label{order_exp}
			\end{equation}
			where $Z_j \sim \text{Exp}(1)\,$ (exponentially distributed with parameter $1\,$).
			Substitution of \eqref{emp_tail_order} and \eqref{order_exp} into the the OLS estimator \eqref{beta_lr_est} leads to
			\begin{equation}
			\hat{\beta}_n^{\text{OLS}_1} \sim  \left(\frac{\sum_{i=1}^{n}\sum_{j=1}^{i}\frac{Z_j}{n-j+1} \log\left(\frac{n}{n-i+1}\right)}{\sum_{i=1}^{n}
				\left(\sum_{j=1}^{i}\frac{Z_j}{n-j+1}\right)^2} \right) \beta \,.
			\label{beta_lr_est1}
			\end{equation}
			Thus we have a representation of the distribution of the LR estimator \eqref{beta_lr_est}. It is a rather complicated formula which is beyond the scope of this paper to analyse fully so we make do with analysing the LR estimator via simulation. We note though that the term in the brackets of \eqref{beta_lr_est1} does not depend on $\beta\,,$ thus the distribution of the LR estimator is a random factor multiplied by $\beta\,.$
			From simulations, see Figure \ref{beta_fits_lr}, we hypothesise two results:
			\begin{align}
			\E[\hat{\beta}_n^{\text{OLS}_1}] &\approx  \log\left(e - (\log n)^{\gamma}/n\right) \beta\,, \quad \gamma \approx 1.6 \label{mean_ols} \\
			\text{Var}(\hat{\beta}_n^{\text{OLS}_1}) &\approx \mathcal{O}(1/n) \label{var_ols}
			\end{align}
			where $e=2.718\dots$ is Euler's number. The free parameter $\gamma$ in \eqref{mean_ols} was approximated using non-linear least squares. Let 
			\begin{equation*}
			r_n=\log\left(e - (\log n)^{\gamma}/n\right)\,, \quad \gamma \approx 1.6 \,.
			\end{equation*}
			This function we propose is an approximation to the mean of the random factor in the brackets in \eqref{beta_lr_est1}.
			We see $ r_n < 1$ for finite $n$ and $r_n \rightarrow 1 $ as $n \rightarrow \infty$ so that 
			 \eqref{mean_ols} underestimates $\beta$ but tends to $\beta$ as $n \rightarrow \infty\,.$ We note that the mean approximation \eqref{mean_ols} is \textbf{sigmoidal} but imperfect especially for small $n$ and further simulation would be needed to check if it is still a good approximation above $n=1000\,,$ see Figure \ref{beta_fits_lr}. It would be future work to determine more about the form of \eqref{beta_lr_est1} and whether it has an analytic mean or variance. If an analytic mean exists then a strictly unbiased LR estimator can be found.
			 	\begin{figure}[H]
			 		\centering
			 		\captionsetup{justification=centering}
			 		\includegraphics[width=0.8\textwidth]{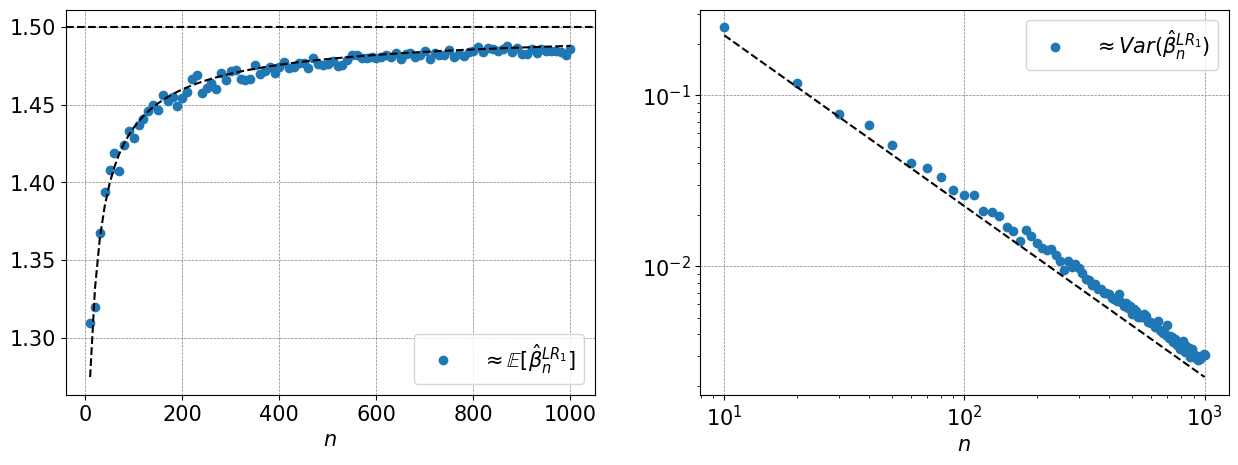}
			 		\caption{Sample mean, left and sample variance, right of the biased LR estimator $\hat{\beta}_n^{\text{OLS}_1}$ \eqref{beta_lr_est} for a Pareto distribution $X_{\text{P}}$ with $\beta=1.5$ and known $x_m=1$ of 1000 samples of size $n=10,20,\dots,1000\,.$ Black dashed line in left is the mean approximation \eqref{mean_ols} and black dashed line in right is $\beta^2/n$ the asymptotic minimum variance of the MLE estimator.}
			 		\label{beta_fits_lr}
			 	\end{figure}
			 
			 We can though still approximately correct for the bias to obtain the approximately unbiased, consistent LR estimator
			\begin{equation}
			\hat{\beta}_n^{\text{OLS}_2}=\frac{\hat{\beta}_n^{\text{OLS}_1}}{r_n}
			\label{beta_lr_est2}
			\end{equation}
			with
			\begin{equation}
			\E[\hat{\beta}_n^{\text{OLS}_2}] \approx \beta\,,
			\quad
			\text{Var}(\hat{\beta}_n^{\text{OLS}_2}) \approx \mathcal{O}(1/n)\,.
			\label{beta_lr_est2_mean_var}
			\end{equation}
			We note though that this transformation is greater in terms of variance:
			\begin{equation*}
			\text{Var}(\hat{\beta}_n^{\text{OLS}_2})= \frac{\text{Var}(\hat{\beta}_n^{\text{OLS}_1})}{r_n^2}>\text{Var}(\hat{\beta}_n^{\text{OLS}_1}) \,.
			\end{equation*}
			However we find through simulation, see Figure \ref{beta_fits_lr} and Figure \ref{beta_fits} in the next section, that the variance of both LR estimators are not too much larger than the minimum variance of the unbiased MLE estimator.
			
		\subsection{Comparison Between LR and MLE with Simulations}
		\label{sec_sim}

		In Figure \ref{beta_fits} we compare the approximately unbiased LR estimator \eqref{beta_lr_est2} with the minimum variance unbiased MLE estimator \eqref{MLE_beta_est1} on 1000 samples with sample sizes of $10,20,\dots,1000.$ We show in Figure \ref{beta_fits} that although the transformed LR estimator is worse than the transformed MLE estimator it is comparable.
		
		\begin{figure}[H]
			\centering
			\captionsetup{justification=centering}
			\includegraphics[width=0.8\textwidth]{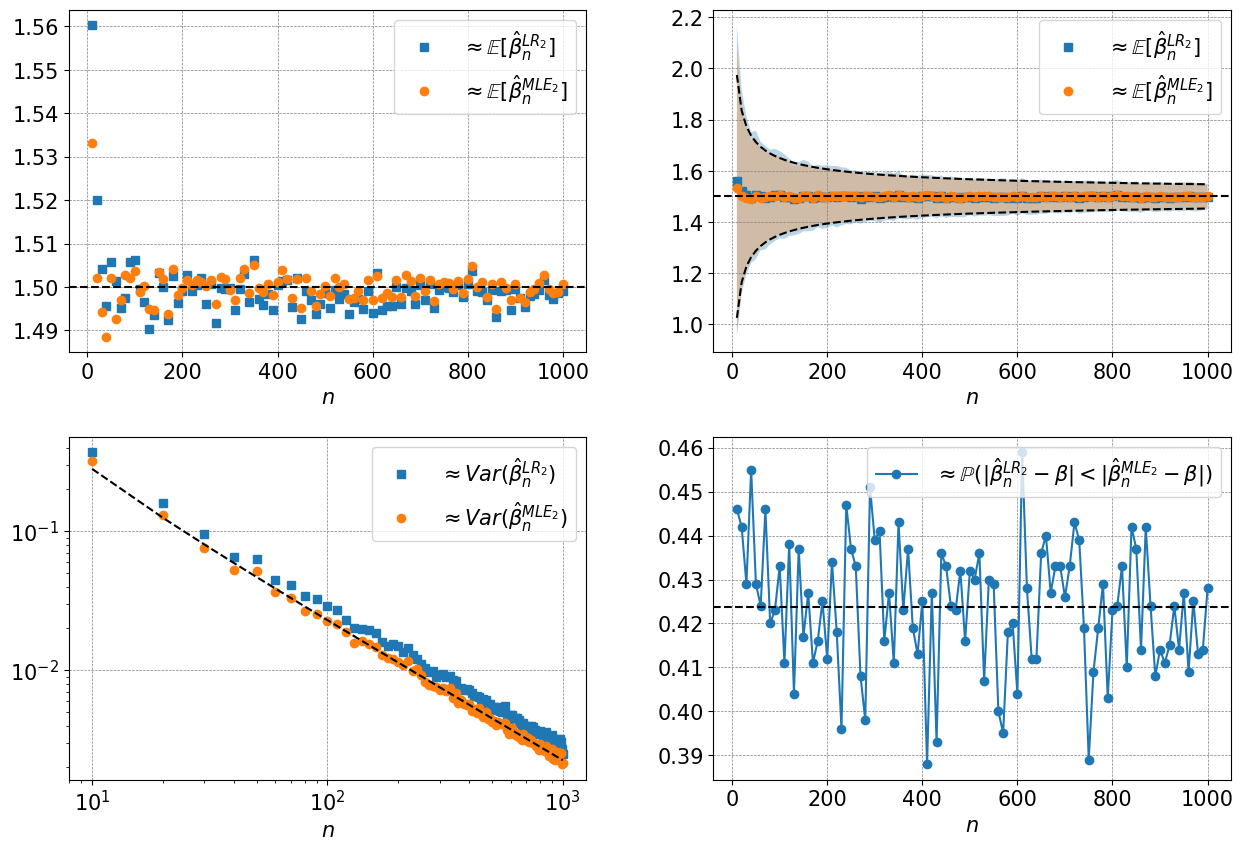}
			\caption{The approximately unbiased LR estimator $\hat{\beta}^{\text{LR}_2}_n$ \eqref{beta_lr_est2} and the unbiased minimum variance MLE estimator $\hat{\beta}^{\text{MLE}_2}_n$ \eqref{MLE_beta_est1} for $\beta$ were found for a $1000$ samples of size $n=10,20,\dots,1000$ from a Pareto distribution $X_{\text{P}}$ with known $x_m=1$ and unknown power law exponent $\beta\,.$ In this case we are estimating $\beta=1.5\,.$ Top left shows the estimator sample means against $\beta$, top right shows the sample means within one sample standard deviation and bounded by the interval \eqref{MLE_interval}, bottom left shows the sample variances where the dotted black line is the transformed MLE estimator variance \eqref{mle1_mean_var} of $\mathcal{O}(1/n)$ and bottom right shows approximately the probability the LR estimator is closer to $\beta$ than the transformed MLE estimator which fluctuates roughly around $0.42\,$.  }
			\label{beta_fits}
		\end{figure}

		\section{Conclusion}
		
		We use linear regression on the empirical tail to fit the exponent of the Pareto distribution that approximates or is identical to a continuous power law distribution given a cutoff value. We give an analytical representation of the distribution of the LR estimator and show evidence it's mean underestimates the true exponent in a sigmoidal fashion. By factoring out a sigmoid function of the LR estimator we find comparable results to the transformed MLE estimator. Although the minimum variance unbiased transformed MLE estimator is superior we nonetheless present these novel results for the LR estimator.
		
		\bibliographystyle{apalike} 
		\bibliography{mybib}
		
		\begin{appendices}
				
			\section*{Appendix: the Empirical Tail}
			
			Suppose $\mathbf{x}=\{x_1,x_2,\dots,x_n\}$ is an i.i.d. random sample from any  random variable $X$. 
			The \textbf{empirical cumulative distribution function} is defined, see e.g. Section 19.1 of \cite{van2000asymptotic}
			\begin{equation*}
			\hat{\P}_n(X < x) = \frac{1}{n}\sum_{i=1}^{n}\mathbf{1}_{x_i < x}
			\end{equation*}
			where $\mathbf{1}_A = \begin{cases}
			1 \quad \text{ if } x \in A \\
			0 \quad \text{ if } x \notin A
			\end{cases}$ is the indicator function. We define the \textbf{empirical tail} as
			\begin{equation}
			\hat{\P}_n(X \geq x) = 1- \hat{\P}_n(X < x) = \frac{1}{n}\sum_{i=1}^{n}\mathbf{1}_{x_i \geq x} \,.
			\label{empirical tail}
			\end{equation}
			which takes values $0,1/n,2/n,\dots,1\,.$
			We also introduce the error of the empirical tail $\varepsilon_n$
			\begin{equation*}
			\hat{\P}_n(X \geq x) = \P(X>x) + \varepsilon_n \,.
			\end{equation*}
			We now state some results which follow from known results on the empirical cumulative distribution function, see e.g. Section 19.1 of \cite{van2000asymptotic}. We have 
			\begin{equation*}
			\hat{\P}_n(X \geq x) \sim (1/n)\text{Binom}(n, \P(X \geq x) ) \,.
			\end{equation*}
			with mean and variance
			\begin{equation*}
			\E[\hat{\P}_n(X \geq x)] = \P(X \geq x)\,, \quad \text{Var}(\hat{\P}_n(X \geq x))=(1/n)\P(X \geq x)\P(X < x)\,.
			\end{equation*}
			By the law of large numbers
			\begin{equation*}
			\hat{\P}_n(X \geq x) \rightarrow \P(X \geq x) \quad \text{almost surely as}\quad n \rightarrow \infty \,.
			\end{equation*}
			By the central limit theorem 
			\begin{equation*}
			\sqrt{n}(\hat{\P}_n(X \geq x) - \P(X \geq x))
			\rightarrow \mathcal{N}(0,\P(X \geq x)\P(X < x)) \,
			\end{equation*}
			in distribution. 
			
			From the above we have the following properties for $\varepsilon_n \,:$
			\begin{equation*}
			\varepsilon_n \sim (1/n)\text{Binom}(n, \P(X \geq x) ) - \P(X \geq x)
			\end{equation*}
			with mean and variance
			\begin{equation*}
			\E[\varepsilon_n] = 0\,, \quad \text{Var}(\varepsilon_n)=(1/n)\P(X \geq x)\P(X < x) \,,
			\end{equation*}
			\begin{equation*}
			\varepsilon_n \rightarrow 0 \quad \text{almost surely as}\quad n \rightarrow \infty
			\end{equation*}
			and for large $n$ we can approximate the error
			\begin{equation*}
			\varepsilon_n \approx \mathcal{N}(0,(1/n)\P(X \geq x)\P(X < x))\,.
			\end{equation*}
			
		   Let $m = \min(\mathbf{x})$ and $M=\max(\mathbf{x})$ be respectively the min and max of the sample $\mathbf{x}\,.$ Also let $\mathbf{x}_{\text{o}}=\{x_{(1)},x_{(2)},\dots,x_{(n)}\}=\mathbf{x}$ be the ordered sample of $\mathbf{x}$ so that \newline $m=x_{(1)}<x_{(2)}<\dots<x_{(n)}=M\,.$ Then we have that 
		   	\begin{equation*}
		   	\hat{\P}_n(X \geq x_{(i)}) = \frac{n-i+1}{n} \in [1/n,1]\,,\quad i=1,2,\dots,n\,.
		   	\end{equation*}
		   and in particular for
		    $x \in [m,M]$ 
			\begin{equation*}
			\hat{\P}_n(X\geq x) \in [1/n,1]\,,\quad i=1,2,\dots,n\,.
			\end{equation*}
			Thus for this case the empirical tail with $x \in [m,M]$ will not be zero and so 
			\begin{equation*}
			\hat{\P}_n(X \geq x) \sim (1/n)\text{Binom}_{k \geq 1}(n, \P(X \geq x) )
			\end{equation*}
			where $\text{Binom}_{k \geq 1}(n, \P(X \geq x) )$ is the truncated Binomial distribution, see e.g. Section 3.11 of \cite{johnson2005univariate}, excluding values that take $0$ and thus with support $\{1,2,\dots,n\}\,.$

		\end{appendices}

\end{document}